 \newif\ifuseboldmathops
\newif\ifuseittextabbrevs
	\newcommand{\ie}{{\it i.e.}}
	\newcommand{\ie}{i.e.~}
	\newcommand{\reals}{\mathbf{R}}
	\newcommand{\reals}{\mathbb{R}}
\newcommand{\Eventually}{\Diamond \, }
\newcommand{\bool}{\mathsf{b}}
\newcommand{\asw}{\mathsf{ASW}}
	\newcommand{\Expect}{\mathop{\bf E{}}\nolimits}
	\newcommand{\Expect}{\mathop{\mathbb{E}{}}\nolimits}
\newcommand{\obs}{\mathsf{Obs}}
\newcommand{\dist}[1]{\mathcal{D}(#1)}
 \newcommand{\sink}{\mathsf{sink}}
 \newcommand{\plays}{\mathsf{Plays}}
\newcommand{\prefplays}{\mathsf{PrefPlays}}
\newcommand{\ball}{\mathsf{Ball}}
\newcommand{\switch}{\pi^{\mathsf{sw}}}
\newcommand{\so}{\pi^{\mathsf{so}}}
\newcommand{\optswitch}{\pi^{\mathsf{sw},\ast}}
\newcommand{\optsemi}{\pi^{\mathsf{sw},\dagger}}
\newcommand{\mc}{\mathsf{MC}}
\renewcommand{\Pr}{\mathsf{Pr}}
\acrodef{mdp}[MDP]{Markov decision process}
\acrodef{asw}[ASW]{Almost-Sure Winning}
\acrodef{bsr}[BSR]{behaviorally subjectively rationalizable}
\definecolor{darkgreen}{rgb}{0,0.5,0}
\newtheorem{problem}{Problem}
\newtheorem{example}{Example}
\newtheorem{remark}{Remark}
\newtheorem*{remark*}{Remark}
\newtheorem{assumption}{Assumption}
\title[Planning with Action Deception]{Quantitative Planning with Action Deception in Concurrent Stochastic Games}
\author{Chongyang Shi}
\affiliation{
  \institution{University of Florida}
  \city{Gainesville}
  \country{United States}}
\email{c.shi@ufl.edu}
\author{Shuo Han}
\affiliation{
  \institution{University of Illinois Chicago}
  \city{Chicago}
  \country{United States}}
\email{hanshuo@uic.edu}
\author{Jie Fu}
\affiliation{
  \institution{University of Florida}
  \city{Gainesville}
  \country{United States}}
\email{fujie@ufl.edu}
\begin{abstract} 
We study a class of two-player competitive concurrent stochastic games on graphs with reachability objectives. Specifically, player 1 aims to reach a subset $F_1$ of game states, and player 2 aims to reach a subset $F_2$ of game states where $F_2\cap F_1=\emptyset$. Both players aim to satisfy their reachability objectives before their opponent does. Yet, the information players have about the game dynamics is asymmetric: P1 has a (set of) hidden actions unknown to P2 at the beginning of their interaction. In this setup, we investigate P1's strategic planning of action deception that decides when to deviate from the Nash equilibrium in P2's game model and employ a hidden action, so that P1 can maximize the value of action deception, which is the additional payoff compared to P1's payoff in the game where P2 has complete information. Anticipating that P2 may detect his misperception about the game and adapt his strategy during interaction in unpredictable ways, we construct a planning problem for P1 to augment the game model with an incomplete model about the theory of mind of the opponent P2. While planning in the augmented game, P1 can effectively influence P2's perception so as to entice P2 to take actions that benefit P1. We prove that the proposed deceptive planning algorithm maximizes a lower bound on the value of action deception and demonstrate the effectiveness of our deceptive planning algorithm using a robot motion planning problem inspired by soccer games.
\end{abstract}
\keywords{Hypergames; Deception; Theory of Mind; Markov Decision Processes}
\newcommand{\BibTeX}{\rm B\kern-.05em{\sc i\kern-.025em b}\kern-.08em\TeX}
\begin{document}


\pagestyle{fancy}
\fancyhead{}


\maketitle 


\section{Introduction}
Asymmetrical information between players is commonly encountered in conflict analysis and security applications \cite{carrollGameTheoreticInvestigation2009,hespanha2000deception,schlenker_deceiving_2018,nguyen_deception_2019}. For adversarial interactions, a player can leverage the asymmetric information, or his/her opponent's disinformation to gain additional benefits towards achieving his/her objective. Such strategic reasoning and planning are termed deception \cite{estornell_deception_2020}.

In general, deception techniques can be categorized into two classes: One is intention deception where the mark has misinformation or disinformation about the intention of the deceiver.  In game theory, intention deception is also known as payoff misperception where players have different perceptions about the payoff matrix in the game.
Many existing literature focus on this class of deception \cite{gharesifardLearningEquilibriaMisperceptions2011,karabagDeceptionSupervisoryControl2021a,kulkarniSignalingFriendsHeadFaking2020} with applications to cyber defense with deception \cite{thakoorCyberCamouflageGames2019a,thakoorExploitingBoundedRationality2020,duLearningPlayAdaptive2022} and motion planning  \cite{liDynamicHypergamesSynthesis2020}.  Another class is capability deception, where the mark has incomplete knowledge about the deceiver's action or perception capabilities. 

For modeling the interactions with asymmetric information, Bayesian games \cite{harsanyiGamesIncompleteInformation1967a} and hypergames \cite{bennettHypergameTheoryMethodology1986} have been used \cite{al-shaerDynamicBayesianGames2019,huangDynamicGamesApproach2020,bakkerHypergamesCyberPhysicalSecurity2019,gharesifardLearningEquilibriaMisperceptions2011}. In Bayesian game, the solution approach is to transform a game with incomplete information to a game with imperfect information by capturing players incomplete information as a type variable, which is not observable to
other players. However, in many deception settings, the type space, \ie, the hypothesis space of the opponent's intention or capabilities are not public knowledge. Hypergame on the other hand, construct a hierarchy of perceptual games that captures the hierarchical information --- what player $i$ knows about the game known to the other player. 

Since a hypergame model allows us to capture the unawareness of players, this work extends the hypergame model to analyze action deception in a class of two-player concurrent, stochastic games where each player has a reachability objective, represented by a set of goal states to be reached. Specifically, as the game starts, player 1 (P1) has a (set of) private actions which are hidden from P2. Additionally, P1 has complete information about P2's actions. Hence, P1 may deploy action deception --- deciding when is the best time to reveal a hidden action to capitalize on the gain from P2's suboptimal decisions caused by P2's incomplete information. In literature, two-player reachability games have been extensively studied for the case in which both the players have symmetric and complete information \cite{mcnaughton1993infinite,zielonka1998infinite, deAlfaro2000,chatterjee2012survey}. Quantitative solutions of reachability games can be formulated using Markov games \cite{filar2012competitive,abeOffPolicyExploitabilityEvaluationTwoPlayer2021}
 
Action deception in reachability games has been studied in  \cite{kulkarni_synthesis_2020}. The authors show that given two-sided perfect observations, the deceiver has a strategy to reach its goal    with probability one  by strategically revealing the private actions, when starting from a state that this objective cannot be achieved with probability one, had P2 known P1's private actions. However,  their solution is qualitative for turn-based games, whereas we investigate the gain of action deception for quantitative planning in concurrent games: When P1 cannot achieve the objective with probability one, how can P1 use action deception to improve his chance of achieving his objective? 

A key observation is that in quantitative planning, P2 may know there is a mismatch between the game she knows and the true game, by detecting a deviation of the gameplay from the predicted distribution resulting from a Nash equilibrium. Therefore, we formulate action deception to determine a switching time of two strategies:
 At the start of the interaction, P1 can select from two strategies: One is his best response $\pi_1^2$ in the game known to P2, called P2's perceptual game, and another is his best response $\pi_1$ in the true game, where a hidden action can be used. P1 is to determine the optimal timing to switch from $\pi_1^2$ to $\pi_1$ to maximize the \emph{value of action deception}, measured by the difference between P1’s payoff gained by deceiving P2 and P1’s payoff from the equilibrium in the true game where P2 knows about P1’s action set.

We develop a deceptive planning algorithm that  incorporates a theory of mind (ToM) of P2 with two components:
1) 
P2’s change detection mechanism: Given the switching time $t$, what is the delay $k_1$ that P2 may have to detect that P1 has deviated from her perceptual game?
2) 	P2’s reaction to deviation: What is P2’s strategy in reaction to P1’s action deception? We employ the solution concept of \emph{ subjective rationalizable strategies} \cite{sasakiSubjectiveRationalizabilityHypergames2014} to model P2's response, that is, P2 always behaves rationally in her perception of the game. However, P1's theory of mind for P2 is incomplete as P1 has no prediction of P2's response when P2 detects the mismatch but has not yet learned P1's private actions. Therefore, the value of action deception is defined to be the optimal gain against all completions of P1's incomplete ToM for P2.
By augmenting P1's planning state space with the additional state variables to track the ToM of P2, we showed that the optimal solution in the constructed planning problem provides a lower bound on the value of action deception using a switching strategy.


\section{Preliminaries and Problem Formulation}

\paragraph{Notations} Let $\reals$ denote the set of real numbers and $\reals^n$ the set of real $n$-vectors. Given a finite set $Z$, the set of probability distributions over $Z$ is represented as $\dist{Z}$. 


\subsection{Concurrent Stochastic Games with Reachability Objectives}
We start by introducing a standard model of two-player stochastic games played on a graph with perfect observations. It consists of two components: A game graph describing the players' interacting dynamics, and a pair of players' intentions/objectives expressed as reachability properties, that is, each player has a set of goal states to be reached. We refer to player 1 as P1 (pronoun he/him/his) and player 2 as P2 (pronoun she/her/hers).

\begin{definition}[Concurrent stochastic games on graphs with reachability objectives]
A two-player, concurrent, stochastic game on a graph is a tuple
\[
G= (S, A, P, s_0, \gamma, F_1,F_2),
\]
with the following components:
\begin{itemize}
 \item $S$ is a finite set of states.
\item $A = A_1 \times A_2$ is a finite set of actions, where $A_1$ is the set of actions that P1 can perform, and $A_2$ is the set of actions that P2 can perform.
\item $P \colon S \times A \to \dist S$ is a probabilistic transition function. At every state $s \in S$, P1 chooses an action $a  \in A_1$, and P2 chooses an action $b \in A_2$ simultaneously. Then, a successor state $s'$ is determined by the probability distribution $P(\cdot \mid s,(a,b))$.
\item $s_0$ is an initial state.
\item $\gamma \in (0,1]$ is a discounting factor;
  \item  $F_1\subseteq S$ is P1's target states, $F_2\subseteq S \setminus F_1$ is referred to as P2's target states.  A reachability objective with the target set $F_i$ means that the player aims to reach a state in $F_i$. All states in $F_1\cup F_2$ are sink/absorbing states, regardless of players' actions.
\end{itemize}
\end{definition} 
In the following, we refer to the game as a concurrent reachability game.
A \emph{play} in the game is constructed as follows: The players start in the initial game state $s_0$, simultaneously select a pair of actions $(a,b)\in A$, and with some probability, move to a next state $s_1$, and repeat. The game ends when one of the players satisfies his/her objective. Thus, a play $\rho $  is a sequence of states and actions $s_0 (a_0,b_0) s_1 (a_1,b_1) \ldots $ such that $P(s_{i+1} \mid s_i, (a_i,b_i))>0$ for any $ i\ge 0$. A prefix of a play is a finite initial segment of the state-action sequence.   The set of all possible plays in the game is denoted by $\plays$. The set of prefixes of plays is denoted by $\prefplays$.

The reachability objective with the target set $F$ can be expressed by the temporal logic formula $\Eventually F$ read ``eventually $F$''. The symbol $\Eventually$ is a temporal operator for the eventuality. \footnote{Since we consider only reachability objectives and use the formula to simplify some notations, we omit the preliminaries for temporal logic, which can be found in \cite{mannaTemporalLogicReactive1992}.} 
A play $\rho=s_0 (a_0,b_0) s_1 (a_1,b_1) \ldots $ is said to satisfy the formula, denoted by $\rho \models \Eventually F$, if there exists $i\ge 0$, $s_i\in F$. 
We denote by $\llbracket \Eventually F
\rrbracket = \{\rho \in \plays \mid \rho \models \Eventually F \}$ the set of plays that satisfy the reachability objective defined by a target set $F$. Let $\rho[i]$ be the $i$-th state in the play $\rho$. For any state $s\in S$, the set of plays starting from $s$ and satisfying the formula, that is, $\{\rho \in \plays \mid \rho \models \Eventually F, \rho[0]=s\}$, can be shown to be measurable.  

A (mixed) strategy $\pi_i \colon \prefplays \to \dist {A_i}$, for player $i \in\{1,2\}$, is a function that assigns a probability
distribution over all actions given a prefix of a play. Let $\Pi_i$ denote the (mixed) strategy space of player $i$. A strategy profile $\langle \pi_1, \pi_2 \rangle$ is a pair of strategies, one for each
player. A strategy profile $\langle \pi_1, \pi_2 \rangle$ induces a probability measure $\Pr^{\langle \pi_1,\pi_2 \rangle}$ over $\prefplays$.

We say that player $i$ \emph{almost surely}  wins the game if the player can ensure, no matter how the opponent plays, that a state in $F_i$ will be reached \emph{with probability one}. We formally define the almost-sure winning region and strategy as follows.
\begin{definition}[Almost-sure winning strategy/region \cite{deAlfaro2000}]
A strategy $\pi_1$ is almost-sure winning for P1 starting from state $s\in S$ if and only if $\Pr_s^{\langle \pi_1,\pi_2\rangle} (\llbracket \Eventually F_1
\rrbracket ) =1$ for any $\pi_2\in \Pi_2$, where $\Pr_s^{\langle \pi_1,\pi_2\rangle}$ is the probability measure over paths starting from $s$ induced by the strategy profile $\langle\pi_1,\pi_2\rangle$.
The winning region of player $i$ is defined by $\asw_i = \{s\in S\mid \exists \pi_1\in \Pi_1,\forall \pi_2\in \Pi_2, \Pr_s^{{\langle \pi_1,\pi_2\rangle}} (\llbracket \Eventually F_i
\rrbracket ) =1\}$, which is the set of states 
starting from which, there exists an almost-sure winning strategy.
\end{definition}
The almost-sure winning region and strategy for P2 are defined analogously, with respect to P2's reachability objective $\Eventually F_2$. An algorithm for computing the almost-sure winning regions and strategies for concurrent stochastic games with reachability objectives can be found in \cite{deAlfaro2000}. Further, the game is memoryless determined. 

\begin{lemma}[\cite{deAlfaro2000}]
In a concurrent reachability game, for any $s\in \asw_1$, there exists a memoryless, almost-sure winning strategy for P1 starting from $s$.
\end{lemma}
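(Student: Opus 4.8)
The plan is to prove the slightly stronger (and more natural) statement that a single memoryless strategy is almost-sure winning from \emph{every} state of $\asw_1$ at once. First I would recall the $\mu$-calculus characterization of the almost-sure winning region from \cite{deAlfaro2000}, built from the almost-sure predecessor operator: for $Y, X \subseteq S$, let $\Apre(Y,X)$ be the set of states $s$ for which P1 has a distribution $\alpha \in \dist{A_1}$ such that, against \emph{every} action $b \in A_2$, the successor stays in $Y$ with probability one, i.e. $\supp(P(\cdot \mid s,(\alpha,b))) \subseteq Y$, while reaching $X$ with positive probability, i.e. $P(X \mid s,(\alpha,b)) > 0$. The known characterization is the nested fixed point
\[
\asw_1 \;=\; \nu Y.\, \mu X.\, \bigl( F_1 \cup \Apre(Y,X) \bigr),
\]
where the outer greatest fixed point isolates the region in which P1 can perpetually avoid leaving the potentially winning set, and the inner least fixed point stratifies that region by how quickly P1 can force positive progress toward $F_1$.

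Next I would read the strategy off the fixed point. Writing $W = \asw_1$ for the outer fixed point, consider the inner iteration evaluated at $Y=W$: set $X_0 = F_1$ and $X_{k+1} = F_1 \cup \Apre(W, X_k)$, so that $W = \bigcup_k X_k$ by finiteness of $S$. Assign each $s \in W$ the rank $r(s) = \min\{k : s \in X_k\}$, with $r(s)=0$ exactly on $F_1$. For every $s \in W \setminus F_1$, membership in $\Apre(W, X_{r(s)-1})$ furnishes a witnessing distribution $\alpha_s \in \dist{A_1}$; define the memoryless strategy $\pi_1$ by $\pi_1(s)=\alpha_s$ on $W\setminus F_1$ and arbitrarily elsewhere (states in $F_1$ are absorbing). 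Since these witnesses depend only on the current state, $\pi_1$ is memoryless.

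It then remains to verify that $\pi_1$ is almost-sure winning from every $s \in W$ against an arbitrary, possibly history-dependent $\pi_2 \in \Pi_2$. The safety half is a one-step induction: the defining property of $\Apre$ keeps $\supp(P(\cdot \mid s,(\alpha_s,b))) \subseteq W$ for all $b$, so a play started in $W$ never leaves $W$ before reaching $F_1$. For the progress half, finiteness yields a uniform bound: since $A_2$ is finite and each $\alpha_s$ forces $P(X_{r(s)-1}\mid s,(\alpha_s,b))>0$ for every $b$, the quantity $p = \min_{s\in W\setminus F_1}\min_{b\in A_2} P(X_{r(s)-1}\mid s,(\alpha_s,b))$ is strictly positive. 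Hence from any state of rank $k$ there is a sequence of at most $k \le n := |S|$ rank-decreasing steps, each occurring with probability at least $p$ regardless of P2, so $F_1$ is reached within $n$ steps with probability at least $p^{n} =: q > 0$, uniformly over the starting state in $W$ and over P2's play.

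Finally, because $\pi_1$ is memoryless and $q$ is a uniform lower bound, I would conclude by a Borel--Cantelli-style argument: conditioning on the state reached every $n$ steps (which, by the safety half, still lies in $W$), the probability of failing to reach $F_1$ within $mn$ steps is at most $(1-q)^m$, which tends to $0$ as $m\to\infty$. Therefore $\Pr_s^{\langle \pi_1,\pi_2\rangle}(\llbracket \Eventually F_1 \rrbracket)=1$ for all $\pi_2$, so $\pi_1$ is a memoryless almost-sure winning strategy. I expect the genuinely delicate step to be the progress argument, namely converting the per-state ``positive probability against all opponent actions'' guarantee of $\Apre$ into one uniform constant $q$ and then into probability-one reachability, since this is exactly where concurrency (only positive, not sure, one-step progress) must be combined with finiteness of the arena; the correctness of the fixed-point characterization is the other load-bearing ingredient, which I would invoke directly from \cite{deAlfaro2000}.
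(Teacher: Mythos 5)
The paper states this lemma without proof, simply citing \cite{deAlfaro2000}, and your reconstruction is correct and follows exactly the standard argument of that reference: the nested fixed point $\nu Y.\,\mu X.\,(F_1\cup\Apre(Y,X))$, the rank-based extraction of a memoryless randomized strategy from the $\Apre$ witnesses, and the combination of the safety invariant with a uniform positive-progress bound turned into probability-one reachability by a geometric-decay argument. There is nothing to add.
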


When the game reaches a state in player $i$'s almost-sure winning region, player $i$ is ensured to eventually reach the target set $F_i$ by following his/her almost-sure winning strategy, whereas player $j$ has no chance of reaching the set $F_j$ given that $F_1 \cap F_2= \emptyset$. 

For any state $s\in S\setminus ( \asw_1 \cup \asw_2)$, both players have a  positive probability to reach their respective target sets. To compute a strategy for any state $s\in S\setminus  (\asw_1 \cup \asw_2)$, we introduce the following utility function and the concept of Nash equilibrium.

\begin{definition}[Utility functions]
The utility function for player $i$ is defined as 
$u_i\colon S \times \Pi_i\times \Pi_j \to \reals$ such that for $(i, j) \in \{(1, 2), (2,1)\}$, \[u_i(s, \pi_i,\pi_j)= 
\Expect^{\langle \pi_i,\pi_j \rangle}\left[ \sum_{t=1}^ \infty \gamma^t \cdot R(S_{t-1},(A_{t-1}, B_{t-1}), S_t)\mid S_0 =s
\right],\]
where $\gamma\in (0,1]$ is the discounting factor, and  $\{(S_i, A_i, B_i); i=0,1,\ldots\}$ taking value in $S\times A_1\times A_2$ is the stochastic process induced by the strategy profile $\langle \pi_i,\pi_j\rangle$ from the concurrent reachability game $G$, and 
$R: S\times A\times S\rightarrow \reals$ is the reward function defined as $R(s,(a,b),s') = 1$ if  $s'\in \asw_i$ and $s\notin \asw_i$ and  $R(s,(a,b),s') =0$ otherwise. 
\end{definition} 
In words, for any state in the positive winning region $s\in S\setminus (\asw_1\cup \asw_2)$, the utility of player $i$ at state $s$ measures the \emph{discounted} probability of reaching his/her almost-sure winning region $\asw_i$ from the state $s$.  
\begin{definition}[Nash equilibrium~\cite{filar2012competitive}]
A Nash equilibrium (NE) of a stochastic game  $G$ is a strategy profile $\langle \pi^\ast_1, \pi^\ast_2 \rangle$ with the property that for $(i, j) \in \{(1, 2), (2,1)\}$ we have 
\[
u_i(s, \pi_i^\ast, \pi_j^\ast) \ge u_i(s, \pi_i, \pi_j^\ast), \forall s \in S, \forall \pi_i \in \Pi_i.
\]
\end{definition}
The NE can be solved using the solutions of zero-sum Markov games \cite{filar2012competitive}.

\subsection{Problem Formulation of Action Deception}

We start by introducing asymmetric information in the game, which enables P1's deceptive planning.

\paragraph{Information Structure} The information owned by a player 
describes not only what the player observes during his/her interaction with the opponent, but also what the player knows about the components of the game. The following information structure is considered:
\begin{itemize}
    \item Both P1 and P2 have complete observations of states.
    \item P2 cannot observe P1's actions but P1 can observe P2's actions.
    \item P1's action set known to P2, denoted $A_1^2$, is a \emph{proper subset} of $A_1$, \ie, $A_1^2\subseteq A_1$.
    \item P1 knows both $A_1^2$ and $A_1$. 
\end{itemize}
\begin{remark}
The assumption that P2 cannot observe P1's actions can be relaxed, as we shall see in the planning algorithm, even if P2 may be able to observe P1's actions, there could still be an advantage for P1 to use action deception.
\end{remark}

Here is an informal problem statement.
\begin{problem}
Given the information structure between P1 and P2, how can P1  exploit P2's lack of information about P1's actions for strategic advantages?
\end{problem}

\subsection{An Illustrative Example: Soccer Game}
\label{sec:soccer}
We introduce a running example named soccer game to explain the above concepts. In this game, the field is a $3 \times 5$ grid. There are two players P1 and P2 in the game (A and B in Figures~\ref{fig:Action} and~\ref{fig:Target}). The ball (the star on the players) is possessed exclusively by one of the players. The two players move simultaneously. 
\begin{figure}[!h]
  \includegraphics[width=0.6\linewidth]{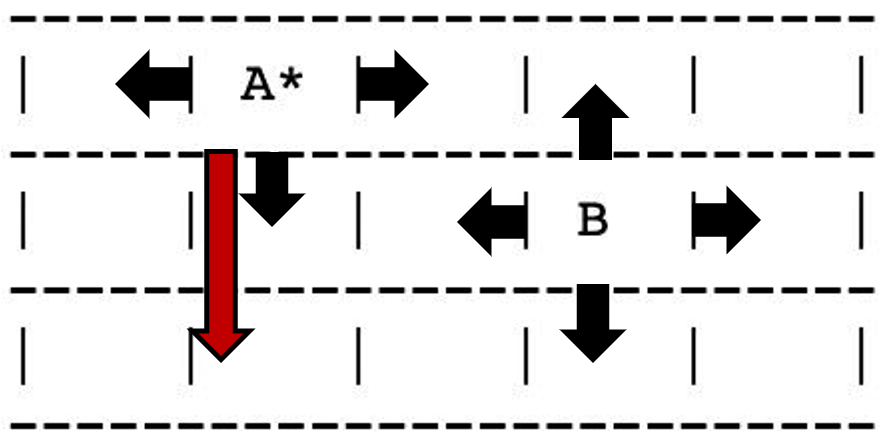}
  \caption{A soccer game between P1 (A) and P2 (B).}
  \label{fig:Action}
\end{figure}
\begin{figure}[!h]
  \includegraphics[width=0.6\linewidth]{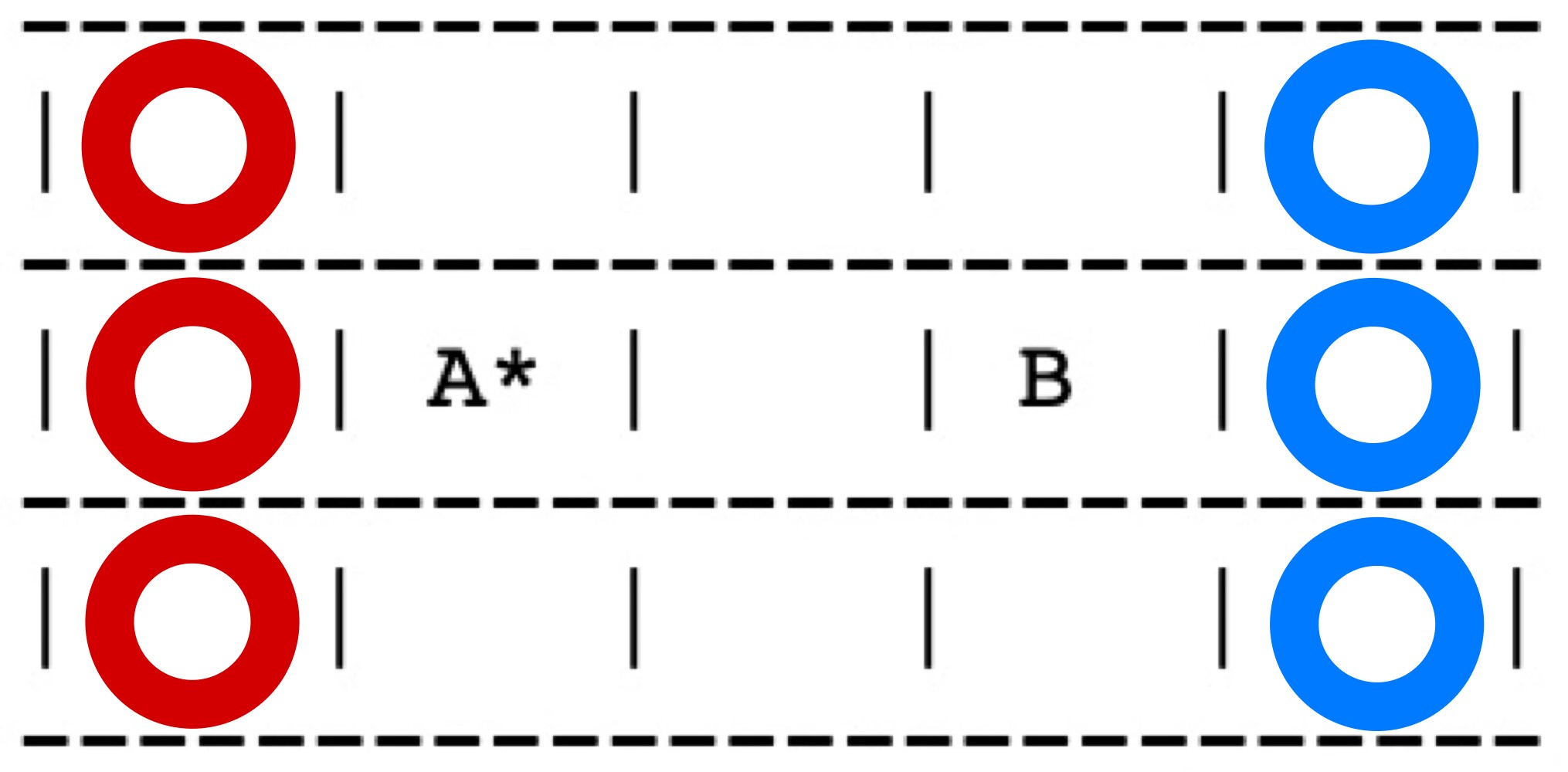}  

  \caption{The target states.}
  
  \label{fig:Target}
\end{figure}
The players can move up, down, left, and right (black arrows on Figure~\ref{fig:Action}), which are denoted by $a_U$, $a_D$, $a_L$, $a_R$, respectively. These actions are known to each player. So P1's action set is known to P2 is $A_1^2 = \{a_U,a_D,a_L,a_R\}$. Besides, P1 has a hidden action that he can move two cells down  (red arrow on Figure~\ref{fig:Action}). The hidden action is denoted by $a_H$. So the true action space of P1 is $A_1 = \{a_U,a_D,a_L,a_R,a_H\}$.

There are two notable rules of a soccer game.
\begin{itemize}
    \item[1.] The players cannot go out of bounds. If they select an action to do so, they will be forced to stay still. 
    \item[2.] When P1 and P2 move to the same cell, they each have a probability of $50$\% to get the ball.
\end{itemize}

The two players aim to bring the ball to their opponent's gate. That is, P1 needs to bring the ball to the blue circles, and P2 needs to bring the ball to the red circles. We denote the position of each player as a tuple $(i,j)$ where $i$ is the player's row and $j$ is the player's column. Let $p_1$ be the position of P1 and $p_2$ be the position of P2. Then we define a boolean variable $\ball = 0,1$. In this way, we can define a state $s$ of a soccer game as $s = (p_1, p_2, \ball)$. Therefore, P1's target set $F_1 = \{((i,4), p_2, 1)\}$ for any $i$ and $p_2$. P2's target set $F_2 = \{(p_1, (i,0), 0)\}$ for any $i$ and $p_1$.

At the beginning of the game, P2 does not know that P1 has the hidden action $a_H$. However, if P1 reveals his hidden action, P2 will update her knowledge and change her strategy. Thus, the question is how  P1 utilizes his hidden action to improve his chance of winning this game? 

\section{Planning with action deception}
In this section, we introduce our main algorithm for action deception planning.
\subsection{Hypergame Modeling and the Value of Action Deception}
We introduce a hypergame model to characterize the players' interaction given their respective information and higher-order information (that is, P1 knows about P2's incomplete information). First, it is observed that due to incomplete knowledge about P1's actions, P2's game graph is incomplete. This incomplete view is used to construct a \emph{perceptual game} for P2.

\begin{definition}[P2's perceptual game]
Given that P2 only knows a subset $A_1^2$ of P1's actions in the game graph $G$, P2's \emph{perceptual game} is defined by $G^2= (S, A_1^2 \times A_2, P^2, s_0, \gamma, F_1, F_2)$ where the transition function $P^2$ is obtained by eliminating all transitions enabled by P1's actions in $A_1\setminus A_1^2$ and any P2's actions in $A_2$. Formally, 
\begin{equation}
    \label{eq:P2trans}
    P^2(s,(a,b))= 
    \begin{cases}
    P(s,(a,b)) &\text{if $a\in A_1^2$, }\\
    \uparrow & \text{ otherwise.}
    \end{cases}
\end{equation}
where $\uparrow$ means the function is undefined for the given input.
\end{definition} 
In P2's perceptual game, for player $i \in \{1,2\}$, the best response strategy $\pi^2_i$ defined for $s\in S\setminus (\asw^2_1\cup \asw^2_2)$ together with the \ac{asw} strategy $\pi_i^{2,\asw}$ are \emph{subjective rationalizable}, because these are best response strategies to the opponent  in P2's perceptual game. Here $\asw^2_i$ is the almost-sure winning region of P2's perceptual game for player $i$ and $\asw_i$ is the almost-sure winning region of true game for player $i$.

To capture the asymmetric information, we extend the hypergame model  \cite{Bennett1977} to our game setup. 
\begin{definition}[Hypergame] 
Given the information structure considered herein, the interaction between P1 and P2 is captured by the hypergame 
\[
H^2 = (H^1  , G^2),
\]
where $H^1 = (G,G^2)$ is P1's perceptual game, which is a level-1 hypergame. The game $G^2$ is P2's perceptual game.
\end{definition}
 
In this level-2 hypergame, P1 knows both the true game $G$ and P2's perceptual game $G^2$. P2 knows only her perceptual game $G^2$. 




Following the notion of action deception, the deceiver hides his actions from the mark for some time and then deviates from the mark's perceptual game, by, for example, employing a strategy that uses the hidden action. We formalize the deceptive planning to determine when to deviate. For clarity, the notations are specified in Table.~\ref{tab:strategies}.
\begin{table}[!htbp]
    \centering    
    \small{
    \begin{tabular}{c|c|c| c}
    \hline
        NE in $G$  &  ASW strategies in $G$ &  NE in $G^2$ & ASW strategies in $G^2$  \\
        $\langle \pi_1,\pi_2\rangle$ & $\langle \pi_1^\asw,\pi_2^\asw\rangle $ & $\langle  \pi_1^2,\pi_2^2 \rangle $ & 
        $\langle \pi_1^{2,\asw},\pi_2^{2,\asw} \rangle $\\
\bottomrule    \end{tabular}}
  \caption{Notations for players' strategies.}
    \label{tab:strategies}
\end{table}

Further, we restrict P1's deceptive strategy to the following class of strategies.
\begin{definition}[One-time switching strategy]
A switching strategy is a function $\switch: \prefplays \rightarrow \dist{\pi_1\cup \pi_1^2}$ that assigns, for a history $\rho \in \prefplays $, a probability distribution over the two best responses, $\pi_1^2$ for P1 in game $G^2$ and $\pi_1$ for P1 in game $G$. The switching strategy is \emph{one-time} if it satisfies the following condition: For any $\rho \in \plays$, there exists a \emph{switching point} $t$ such that $\switch(\rho[0:k]) = \pi_1^2$ for all $k\le t$ and $\switch(\rho[t+1:t+n])=\pi_1$ for all $n >1$. 
\end{definition}

It is noted that P1 may not use the hidden action \emph{immediately} upon the switching. For example, the best response $\pi_1$ may not employ a hidden action till a later time after the switching time. Still, when P1 deviates from P2's perceived best response $\pi_1^2$ for P1,   it is possible for P2 to detect a mismatch of the observed game play from her perceptual game $G^2$, albeit with some delay. Thus, P1's deceptive planning must incorporate a theory of mind for P2 and a reasonable detection mechanism that P2 can use. Next, we show that P1's theory of mind for P2 is inherently incomplete. 

Assuming that P2 can detect the deviation of P1 at some time $t+k_1$, for $k_1\ge 0$. In P1's theory of mind of P2, the strategy of P2 shall be \emph{\ac{bsr}} \cite{sasakiSubjectiveRationalizabilityHypergames2014} in the hypergame, defined as follows. %

\begin{definition}[P1's Incomplete Model of P2's Behaviorally Subjectively Rationalizable Strategy]
\label{def:bsr}
P1's incomplete model of a behaviorally subjectively rationalizable strategy (BSR) for P2 is a   function $\pi_2^B: \prefplays \rightarrow \dist{A_2}\cup \uparrow$. The function $\pi_2^B$ is constructed as follows: For any history $h =s_0 (a_0,b_0)s_1 (a_1,b_1)s_2\ldots s_n \in \prefplays$ of length $n$, let $h_{0:t}$ be the history during which P1 follows the best response $\pi_1^2$ in $G^2$ and $h_{t+1: n} $ be the history during which P1 follows the best response $\pi_1$ in the true game $G$, let $k_1 $ be the time step when P2 detects the deviation and $k_2$ be the time step when P2 learns about P1's true action set $A_1$, it holds that:
 \begin{itemize}
     \item  For all $0 \le i \le t+k_1, $
  \[
    \pi_2^{B }(h_{0:i}) = \begin{cases}  \pi_2^2(s_i), & \text{ if } s_i \notin \asw_2^2,\\
    \pi_2^{2,\asw}(s_i), & \text{ otherwise.}
    \end{cases}  
  \]
  That is, P1 predicts that P2 follows the best response in the game $G^2$.
\item   For all $  t+k_1 <i \le t+k_1+k_2  $, 
  \[
    \pi_2^{B}(h_{0:i}) = \uparrow,
   \]
   
   That is, P1 cannot predict what strategy P2 will follow during this time span. Thus, the strategy is undefined.
   
 \item   And for all $i > t+k_1+k_2$, 
  \[  \pi_2^{B}(h_{0:i}) =\begin{cases}  \pi_2 (s_i),  & \text{ if } s_i \notin \asw_2,\\
  \pi_2^{\asw} (s_i),  & \text{ otherwise.}
  \end{cases}
  \]
  That is,  P2 follows the best response in the game $G$.
 \end{itemize} 
\end{definition}

P1's model of P2's \ac{bsr} strategy is incomplete because P1 cannot predict what strategy P2 will employ once P2 detects that the game she knows is incorrect but does not yet know what the true game is. In the case when P2 cannot observe P1's actions, it is possible that P2 will not learn the true game dynamics given her partial observations.  In this model, P2 always commits to her subjective rationalizable strategy in his perceptual game, whether it is $G^2$ in the beginning or $G$ after learning P1's actions. There is no advantage to deviate from the subjective rationalizable strategy.

A \emph{completion} of P1's model of P2's \ac{bsr} strategy, denoted by $\tilde \pi_2^B$, is defined such that for any $h \in \prefplays$, if $\pi_2^B(h)$ is defined, then $\tilde \pi_2^B (h) =  \pi_2^B(h)$, otherwise $\tilde \pi_2^B(h)\in \dist{A_2}$ can be an arbitrary distribution over P2's actions. We define $\tilde \Pi_2^B$ be the set of \emph{all possible completions} for $\pi_2^B$. With this notion, we can define the value of action-deception as follows.

\begin{definition}[The value of action deception using one-time switching strategy]
\label{def:VoD}
For any one-time switching strategy $\switch$ of P1, the \emph{value of action deception} for any initial state $s_0 \in S\setminus (\asw_1 \cup \asw_2)$, \ie, the  positive winning region for P1/P2, 
\[
\mathsf{VoD}(\switch)= \min_{\tilde \pi_2^B \in \tilde \Pi_2^B}{u_1(s_0, \switch, \tilde \pi_2^B)} - {u_1(s_0, \pi_1,\pi_2)},
\]  
and the optimal one-time switching strategy $\optswitch$ is such that 
\[
\optswitch = \arg\max_{\switch \in \Pi_1^{\mathsf{sw}}}
\mathsf{VoD}(\switch),
\]
where $\Pi_1^{\mathsf{sw}}$ is the set of one-time switching strategies in which P1 can select. The optimal value of action deception is 
$\mathsf{VoD}(\optswitch)$.
\end{definition}
By definition, if the value of action deception is greater than 0, then P1 will gain more payoff against P2  by using action deception than what P1 should have obtained if P1 informs P2 of the true game dynamics. Because P1 cannot predict how P2 reacts upon detecting the game mismatch, P1's computation of the value of action deception considers the worst case completion of his incomplete model of P2's \ac{bsr} strategy.

\begin{example}
\label{ex1}
 

We use a variation of the soccer game to illustrate the two different key events when P1 employs a one-time switching strategy.
\begin{itemize}
    \item[1.] P1 switches his strategy from $\pi_1^2$ to $\pi_1$.
    \item[2.] P2 detects that P1 deviates from the equilibrium in P2's perceptual game $G^2$.
    \item[3.] P1 uses his hidden action.
\end{itemize}
Consider the arena in Figure \ref{fig:Bouncing}, where the blue cells represent bouncing walls. We assume $A_1^2=A_2 = \{a_U, a_D, a_L, a_R\}$ and a hidden action for P1 is that P1 can traverse the yellow cell. But in P2's perceptual game, that yellow cell is a bouncing wall. The target states for P1 and P2 are the same as those in the soccer game introduced in Subsection~\ref{sec:soccer}. 

Since P2 does not know the hidden action of P1, P2's subjective rationalizable strategy in $G^2$ will inform  P2 to reach the starred cell, where she can intercept P1 with the highest probability, given P1's best response in $G^2$. P2 predicts that P1 will also move to the top row.  In the meantime, an optimal strategy for P1 is to move toward the bottom corridor and eventually use his hidden action to win. In this example, P1 will switch his strategy to $\pi_1$ as the game starts and then use his hidden action when the yellow cell is reached, which shows that event 3 occurs after event 1, and with a possible delay. 
In this example, events 1 and 2 occur at the same time.
\begin{figure}[!h]
  \centering
  \includegraphics[width=0.6\linewidth]{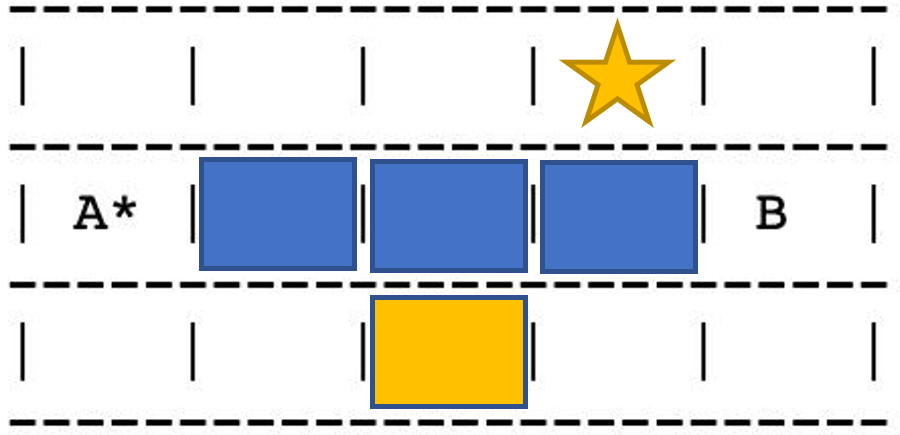}
  \caption{Soccer game with bouncing walls. }
  \label{fig:Bouncing}
  \centering
\end{figure}
In general, 2 only occurs after 1, with an inherent delay introduced by change detection algorithms. 
\end{example}

\subsection{A Detection Mechanism}
To complete a theory of mind for P2, we incorporate a change detection mechanism that predicts the detection time $k_1$.

\begin{assumption}
P2 should not detect any deviation if P1 follows the best response $\pi_1^2$ in P2's perceptual game $G^2$.
\end{assumption} 

Given P2's perceptual game $G^2$, P2's knowledge about her action sequence $b_0 b_1,\ldots, b_n$, and P2's knowledge about P1's best response  $\pi_1^2$ in $G^2$, P2's null hypothesis is a Markov chain $\mc_0=\{S_t, t\ge 0\}$ where $S_0= s_0$ is the  initial game state  and the transition function  is defined by
\begin{multline*}
    \Pr_0(S_{t+1} =s' \mid S_t =s, B_t = b_t) =
    \\
    \sum_{a \in A_1^2}    P^2(s' \mid  s, a, b_t)\pi_1^2(s,a), \text{ for } t=0,1,\ldots, n.
\end{multline*}

Had P2 known that P1 will switch to strategy $\pi_1$, which is P1's best response in the true game, he can construct the alternative hypothesis as another Markov chain  $\mc_1=\{S'_t, t\ge 0\}$ with the initial state $S'_0=s_0$ and the transition function 

\begin{multline*}
\Pr_1(S'_{t+1} =s' \mid S'_t =s, B_t = b_t) =
\\
\sum_{a \in A_1 }    P^2(s' \mid  s, a, b_t)\pi_1 (s,a), \text{ for } t=0,1,\ldots, n.
\end{multline*}

Assuming P2 knows both $\mc_0$ and $\mc_1$, then P2 can employ Page's likelihood ratio CUSUM change point detection \cite{lai1995sequential} to detect if P1 switched the strategy.  Given the observation  $\{s_0, \dots, s_N\}$ generated by  $\mc_0$ for the first $k$ steps and then $\mc_1$ for the remaining $k+1,\ldots, N$ time steps, with the knowledge of P2's action sequence $b_k, k=0,\ldots, N-1$, the change detector is to determine the change point.
The stopping time $N_G$ is defined as follows.
\[
    N_G = \inf \{n \mid \max_{1 \leq k \leq n} \Big[ \sum_{i=k}^n \log\frac{\Pr_1(s_{i}\mid s_{i - 1}, b_{i-1})}{\Pr_0(s_{i}\mid s_{i - 1}, b_{i-1})} \Big] \geq c_\gamma\},
\]
where $c_\gamma$ is a user-defined threshold and $1 \leq n \leq N_G$.   We introduce the following discrimination function: Let $s_{0:n}= s_0, b_0, s_1, b_1,\ldots, s_n$ be P2's observation up to time $n$,
\[
    d(s_{0:n}) = \max_{1 \leq k \leq n} \Big[ \sum_{i=k}^n \log\frac{\Pr_1(s_{i}\mid s_{i - 1}, b_{i-1})}{\Pr_0(s_{i}\mid s_{i - 1}, b_{i-1})} \Big].
\]
The update of the discrimination function can be made incremental as follows,
\begin{multline*}
d( d(s_{0:n-1}), (s_{n-1}, b_{n-1}, s_n)) =\\ \max \Big\{ d(s_{0:n-1}) + \log \frac{\Pr_1(s_{n}\mid s_{n - 1}, b_{n-1})}{\Pr_0(s_{n}\mid s_{n - 1}, b_{n-1})}, 0 \Big\}.
\end{multline*}

\begin{remark} If P2 can observe P1's action sequence $a_0a_1,\ldots ,a_n$, then we only need to construct the Markov chain over the set $S\times A_1$.
\begin{multline*}
    \Pr_0(S_{t+1} =s', A_t= a_t \mid S_t =s, B_t = b_t) =
    \\
      P^2(s' \mid  s, a_t, b_t)\pi_1^2(s,a_t), \text{ for } t=0,1,\ldots, n.
\end{multline*} 
The chain for the alternative hypothesis is constructed analogously. 
\end{remark}

The assumption that P2 knows the alternative hypothesis $\mc_1$ is unrealistic. We refer to this P2 with such knowledge as an \emph{informed} opponent. 


 
Next, we show that P1's deceptive planning strategy against such an informed opponent will provide a lower bound on the performance for P1's deceptive planning against P2 in the actual game. Our formulation employs a semi-\ac{mdp}, which is a class of \ac{mdp} in which the agent selects policies rather than primitive actions. The semi-\ac{mdp} is equivalently expressed as a one-player stochastic game where P1 makes a decision, and then the nature player determines a stochastic outcome. In this way, we can capture clearly how nature's choice affects the theory of mind of P2, constructed by P1.
 

\begin{definition}[Planning an optimal one-time switch action deception]
\label{Meta-planning}
The planning with action deception can be formulated as a semi-Markov decision process
\[
M = (V_1 \cup V_N ,  \{\pi_1^2,\pi_1\}, \Delta, R),
\]
where 
\begin{itemize}
    \item $V_1 \coloneqq  S\times \Phi \times \{0,1\}$ is the state set at which P1 makes a decision.
    Each state $(s, \phi, \bool)$ includes a state $s\in S$ of the original game $G$, a real number $\phi \in [0, + \infty)$, and a Boolean $\bool \in \{0,1\}$. Here $\phi$ represents the value of the discrimination function given some history. The Boolean $\bool$ keeps track of whether P1 has deviated from the equilibrium in $G^2$.

    \item 
    $V_N\coloneqq S\times  \Phi \times \{0,1\} \times A_1\times A_2$ is the state set at which nature determines a probabilistic outcome.
    \item $\pi_1^2,\pi_1$ are two macro-actions (policies) for P1.
    \item $\Delta$ is the probabilistic transition function, defined for both P1 and the nature player's states.  
    
First, at any nature's state $v = (s,\phi,\bool, a,b)$, if $s \in \asw_1 \cup \asw_2$ (almost-sure winning regions  for either player in the true game $G$), then with probability one, a sink state $\sink$ is reached, \ie, 
\[
\Delta((s,\phi,\bool,a,b),\lambda ,\sink)=1.
\]
where $\lambda$ is a null action, representing nature's probabilistic choice.

Second, consider a P1's state $(s, \phi, \bool)$,  there are three cases:

Case 1: if $\bool=0$ and P1 chooses the strategy $\pi_1^2$, then 
\[
\Delta((s,\phi,0), \pi_1^2, (s, \phi, 0, a, b )) =  \pi_1^2(s,a)\cdot \pi_2^2(s,b)
\]

At the state $(s, \phi, 0, a, b )$, the nature determines a probabilistic outcome
\[
\Delta((s,\phi,0, a, b),\lambda, (s', \phi , 0)) = P(s,(a, b),s').
\]

Case 2: if $\bool=0$ and 
 P1 switches to strategy $\pi_1$, then 
\[
\Delta((s,\phi,0), \pi_1, (s, \phi , 1, a, b )) =  \pi_1(s,a)\cdot \pi_2^2(s,b),
\]
where the Boolean switches from $0$ to $1$ indicating that P1 switched strategies. Then, 
at the state $(s, \phi, 1 , a, b )$, the nature determines a probabilistic outcome, \[
\Delta((s,\phi,1, a, b ), \lambda, (s', \phi', 1)) = P(s,(a,b),s'),
\]
where $\phi' = d(\phi, \obs(s,(a,b),s'))$ is the updated value for the discrimination function given P2's observation of the transition.

Case 3: If $\bool=1$, P1 only has one macro-action, which is to follow his best response in the game $G$. Consider a state $(s, \phi, 1)$, then 
\[
\Delta((s,\phi,1), \pi_1, (s, \phi, 1, (a,b))) =  \pi_1(s,a)\cdot \pi_2^2(s,b).
\]
Note that P2 still follows the NE in her perceptual game $G^2$.

Given the state $(s, \phi, 1, (a,b))$, the nature decides the next state probabilistically. If $\phi \leq c_\gamma$, then
\[
\Delta((s,\phi,1,(a,b)),\lambda, (s', \phi', 1)) = 
P(s,(a,b),s').
\]
where  $\phi' = d(\phi, \obs(s,(a,b),s'))$.

Otherwise, if $\phi  > c_\gamma$,
\[
\Delta((s,\phi,1,(a,b)), \lambda, \sink) = 1.
 \]



\item ${R}\colon V_1\cup V_N \to \mathbb{R}$ is a state-based reward function. For any $v\in V_1$, $R(v)=0$, and for any $v =(s, \phi, \bool,a,b)\in V_N$,
\begin{multline*}
    {R}(v) = 
    \begin{cases}
    1,  \quad \text{if} \;\; s \in \asw_1,\\
    -1, \quad  \text{if}\;\;  s \in \asw_2, \\
    u_1(s, \pi_1,\pi_2), \quad \text{if} \;\; \phi > c_\gamma \land s \not\in \asw_1 \cup \asw_2,\\
    0, \quad \text{otherwise. }
    \end{cases}
\end{multline*}
\end{itemize}

\end{definition}

In this planning problem, the process terminates when one of the players reaches his/her almost-sure winning regions in the true game, or P2 detects the deviation of P1's best response from his perceptual game. The semi-\ac{mdp} can be solved using dynamic programming algorithms.

\begin{lemma}
\label{lma:subsetasw}
Given that $G^2$ can be obtained from $G$ by eliminating all transitions enabled by pairs of  P1's hidden actions $A_1\setminus A_1^2$ and P2's actions, it holds that $\asw_2 \subseteq \asw_2^2$.
\end{lemma}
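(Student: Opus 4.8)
The plan is to establish the inclusion by a simple monotonicity argument: shrinking P1's action set from $A_1$ down to $A_1^2$ can only weaken the adversary that P2 faces, and hence can only enlarge P2's almost-sure winning region. Concretely, I would fix an arbitrary state $s\in \asw_2$ and exhibit a single P2 strategy that is almost-sure winning from $s$ in the perceptual game $G^2$.

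First, by the definition of $\asw_2$ together with the memoryless determinacy lemma (which applies to P2's objective $\Eventually F_2$ by symmetry), there is a memoryless strategy $\pi_2\colon S\to \dist{A_2}$ that is almost-sure winning for P2 from $s$ in the true game $G$; that is, $\Pr_s^{\langle \pi_1,\pi_2\rangle}(\llbracket \Eventually F_2\rrbracket)=1$ for every $\pi_1\in\Pi_1$. Because the state space $S$ and P2's action set $A_2$ are identical in $G$ and $G^2$, the very same $\pi_2$ is a legitimate P2 strategy in $G^2$, and the goal reduces to showing it is almost-sure winning there as well.

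Next, I would take an arbitrary P1 strategy $\pi_1'$ in $G^2$. By construction $\pi_1'$ places probability only on actions in $A_1^2\subseteq A_1$, so it can be read verbatim as a P1 strategy in $G$ that never uses a hidden action. The key step is to verify that the profile $\langle \pi_1',\pi_2\rangle$ induces the \emph{same} probability measure over plays in $G^2$ as in $G$. This follows by induction on the length of play prefixes: every prefix generated under $\langle\pi_1',\pi_2\rangle$ uses only actions $a\in A_1^2$, and for such actions $P^2(s',(a,b)) = P(s',(a,b))$ by \eqref{eq:P2trans}; hence the one-step transition distributions agree, and so do the measures of all cylinder sets and therefore of the measurable event $\llbracket\Eventually F_2\rrbracket$.

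Finally, since $\pi_2$ is almost-sure winning in $G$ we have $\Pr_s^{\langle\pi_1',\pi_2\rangle}(\llbracket\Eventually F_2\rrbracket)=1$ as computed in $G$, and by the measure coincidence the same probability equals $1$ in $G^2$. As $\pi_1'$ was an arbitrary P1 strategy in $G^2$, the strategy $\pi_2$ is almost-sure winning for P2 from $s$ in $G^2$, so $s\in\asw_2^2$; since $s\in\asw_2$ was arbitrary, $\asw_2\subseteq\asw_2^2$. I expect the only delicate point to be the third paragraph — the formal identification of a $G^2$-strategy with a $G$-strategy and the verification that the two induced measures coincide on $\llbracket \Eventually F_2\rrbracket$ — but this ultimately rests on the elementary observation that, after this embedding, the adversary's strategies available in $G^2$ form a subset of those available in $G$, so a P2 strategy that wins against all of the latter necessarily wins against all of the former.
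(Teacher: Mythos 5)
Your proposal is correct. Note that the paper itself does not spell out a proof: it states only that the claim ``follows from the computation of almost-sure winning regions'' in \cite{deAlfaro2000}, i.e.\ it implicitly appeals to monotonicity of the fixed-point characterization of $\asw_2$ with respect to P1's action set. Your argument reaches the same conclusion by a more elementary, self-contained route: you transfer a single almost-sure winning P2 strategy from $G$ to $G^2$ and observe that the adversary's strategy space in $G^2$ embeds into that of $G$ with the induced play measures unchanged (since $P^2$ agrees with $P$ on all pairs $(a,b)$ with $a\in A_1^2$). This buys a proof that does not depend on the internals of the $\asw$ computation algorithm, at the cost of the small bookkeeping in your third paragraph. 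Two minor remarks: the appeal to memoryless determinacy is convenient but not necessary --- any almost-sure winning strategy of P2 in $G$, restricted to the prefixes realizable in $G^2$ (a subset of those of $G$), serves equally well; and it is worth saying explicitly that the target set $F_2$ and hence the event $\llbracket \Eventually F_2\rrbracket$ are identical in both games, so the measure-coincidence argument applies to the right event. Neither point is a gap.
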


The proof follows from the computation of almost-sure winning regions \cite{deAlfaro2000} in a concurrent reachability game and thus is omitted.  Intuitively, P1's hidden actions can make a state $s\in \asw_2^2\setminus \asw_2$ becomes positive winning for P1 and P2, regardless of P2's action choice.

The following statement holds:
\begin{theorem}
Let $\optsemi$ be the optimal policy in the semi-\ac{mdp} $M$,   it holds that \[
 \mathsf{VoD}(\optsemi)   \le \mathsf{VoD}(\optswitch). 
\]
where $\optswitch$ is the  optimal one-time switching strategy for action deception (Def.~\ref{def:VoD}).
\end{theorem}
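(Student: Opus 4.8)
The plan is to read $\mathsf{VoD}(\optsemi)$ as the optimal expected total reward of the semi-\ac{mdp} $M$ attained by $\optsemi$, and to show that this quantity is exactly what $M$ is engineered to compute: a per-strategy lower bound on the true value of action deception. First I would record the structural fact that every policy $\mu$ of $M$ induces a well-defined one-time switching strategy $\switch_\mu \in \Pi_1^{\mathsf{sw}}$: the Boolean coordinate $\bool$ flips from $0$ to $1$ exactly once (Case~2), P1's macro-action is $\pi_1^2$ before the flip and $\pi_1$ after it, which is precisely a one-time switch, and conversely every one-time switch arises this way. It therefore suffices to prove the per-policy inequality $J_M(\mu) \le \mathsf{VoD}(\switch_\mu)$, where $J_M(\mu)$ is the expected reward of $\mu$ in $M$; the theorem then follows from $\mathsf{VoD}(\optsemi) = J_M(\optsemi) = \max_\mu J_M(\mu) \le \max_\mu \mathsf{VoD}(\switch_\mu) \le \mathsf{VoD}(\optswitch)$, the last step being the optimality of $\optswitch$ in Definition~\ref{def:VoD}.

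To establish $J_M(\mu) \le \mathsf{VoD}(\switch_\mu)$, I would compare the trajectory distribution of $M$ with that of the true game in which P1 plays $\switch_\mu$ and P2 plays an arbitrary completion $\tilde\pi_2^B \in \tilde\Pi_2^B$, phase by phase relative to the detection event $\phi > c_\gamma$. Before detection both processes are driven by the same pair $\langle \cdot, \pi_2^2\rangle$: in $M$ P2 plays $\pi_2^2$ by construction (Cases~1--3), and in the true game every completion agrees with $\pi_2^B$, hence equals $\pi_2^2$, for all times up to $t+k_1$ by Definition~\ref{def:bsr}. The remaining point here is a timing comparison: the crossing of $c_\gamma$ in $M$ is computed from the \emph{informed} discrimination function $d$ built on $\mc_1$, which is the earliest time any detector distinguishing $\mc_1$ from $\mc_0$ can fire; the actual P2, lacking $\mc_1$, cannot detect strictly earlier. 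Consequently the exploitation phase, in which P1 plays $\pi_1$ against the suboptimal $\pi_2^2$ and accrues gain over the baseline, lasts at least as long in the true game, so truncating it early in $M$ can only lower the accumulated value.

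The crux is the post-detection comparison, where I would invoke the zero-sum structure of the reachability game. At a detection state $s \notin \asw_1\cup\asw_2$, $M$ terminates the trajectory and credits the single terminal reward $u_1(s,\pi_1,\pi_2)$, the equilibrium value at $s$. In the true game, after detection P1 continues with his best response $\pi_1$ in $G$ while P2 executes some arbitrary continuation prescribed by the completion on the undefined interval $t+k_1 < i \le t+k_1+k_2$ and thereafter reverts to $\pi_2$. Because $u_1(s,\pi_1,\pi_2)$ is the value of a zero-sum Markov game and $\pi_1$ is a corresponding maximin strategy for P1, we have $u_1(s,\pi_1,\pi_2') \ge u_1(s,\pi_1,\pi_2)$ for \emph{every} P2 strategy $\pi_2'$; in particular the worst completion cannot drive P1's continuation value below the terminal reward that $M$ assigns. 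Lemma~\ref{lma:subsetasw}, via $\asw_2\subseteq\asw_2^2$, confirms this continuation value is the relevant one, since states in $\asw_2^2\setminus\asw_2$ that looked lost to P1 in P2's perceptual game are positive-winning once the hidden actions are available. Combining the two phases, for every $\tilde\pi_2^B$ the true discounted return of $\switch_\mu$ dominates the return bookkept by $M$; taking the minimum over completions and accounting for the common equilibrium baseline $u_1(s_0,\pi_1,\pi_2)$ appearing in both Definition~\ref{def:VoD} and the terminal reward of $M$ yields $J_M(\mu) \le \mathsf{VoD}(\switch_\mu)$.

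I expect the main obstacle to be making the phase-by-phase coupling rigorous, and in particular the detection-timing step: one must argue carefully that the informed CUSUM threshold crossing is a stopping time no later than any detection the uninformed P2 or any completion can realize, and that stopping the accounting at this earliest time, together with the conservative terminal credit $u_1(s,\pi_1,\pi_2)$, under-counts rather than over-counts P1's true return. The zero-sum maximin inequality is what makes the terminal credit safe even though the fixed $\pi_1$ is generally not a best response against the arbitrary completion on the undefined interval; isolating and invoking it cleanly is the key technical move.
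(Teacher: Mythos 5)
Your proposal is correct and follows essentially the same route as the paper's proof: a trajectory-wise correspondence between the semi-MDP and the true game, the observation that P2 plays $\pi_2^2$ throughout the pre-detection phase so the dynamics coincide, and the zero-sum maximin inequality $u_1(s,\pi_1,\hat\pi_2)\ge u_1(s,\pi_1,\pi_2)$ to certify that the terminal credit at detection under-counts P1's actual continuation value (with Lemma~\ref{lma:subsetasw} handling the $\asw_2$ termination case). Your version is somewhat more explicit about the policy-to-switching-strategy correspondence and the final maximization step, but the key ideas and their roles are the same.
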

\begin{proof}For each   state sequence that terminates in $\sink$ in the semi-\ac{mdp}, 
$
\tilde \rho = (s_0, \phi_0, \bool_0)  (s_0, \phi_0, \bool_0, a_0,b_0)   (s_1, \phi_1,\bool_1),   \ldots\\ (s_N, \phi_N, \bool_N, a_N, b_N),   \sink$, 
we can identify one unique play in the original game  $\rho = s_0 (a_0,b_0)s_1 (a_1,b_1)s_2\ldots s_N \in \prefplays. $

Following the previous analysis of P2's \ac{bsr} strategy, let $t$ be P1's strategy switching time, $t+k_1$ be the time when P2 detects a deviation and $t+k_1+k_2$ be the time when P2 learns about the true game.
The following cases are possible:

Case 1: If $t+ k_1 > N$, which means that P2 does not detect P1's deviation from $\pi_1^2$ and thus for all $0\le k \le N$, $h_k \le c_\gamma$, then the game can only terminate if $v_N \in \asw_1\cup \asw_2$. In the case that $s_N\in \asw_1$, P1 receives a payoff of 1 as he can use the almost-sure winning strategy starting from $s_N$. In the second case, $s_N\in \asw_2$, P1 receives a payoff of -1. Because $\asw_2\subseteq \asw_2^2$ (Lemma~\ref{lma:subsetasw}) and P2 does not know the true game $G$, P2 can commit to the almost-sure winning strategy $\pi_2^{2,\asw}$. Such a strategy is sub-optimal in the true game. Thus, P1's payoff of $-1$ is a lower bound on the actual payoff P1 may receive because P2's suboptimal strategy may provide a possibility for the game states to leave $\asw_2$. This is because a policy almost-sure winning for $\asw_2^2$ may not be almost-sure winning for $\asw_2$ based on the qualitative analysis of reachability objectives \cite{deAlfaro2000}. 

Case 2: If $t+ k_1=N$, which means that P2 detects P1's deviation from $\pi_1^2$ at time step $N$, then P1's reward is given by the payoff of the equilibrium in the true game. This reward is a lower bound because it assumes that after detection, P2  commits to his best response in $G$. This assumption ignores the possible delay that P2 learns about the true game after detection. For any strategy $\hat \pi_2$ that P2 can commit to after the detection, we have that  $ u_2(s_N,\pi_1, \hat \pi_2) \le  u_2(s_N,\pi_1, \pi_2) $ and thus $ u_1(s_N,\pi_1, \hat \pi_2) \ge  u_1(s_N,\pi_1, \pi_2) $ because $\langle \pi_1,\pi_2\rangle$
is the equilibrium in the zero-sum game $G$. 

Given both cases, the reward P1 obtains upon reaching $\sink$, is a lower bound on the actual payoff P1 receives against a \ac{bsr} strategy employed by P2. In addition, if the process does not terminate, then the reward obtained by P1 is zero, which is the same as the reward of a non-terminating play for P1 against any \ac{bsr} strategy of P2. 

Thus, the optimal value of action deception is lower bounded by the value of the optimal policy in the semi-\ac{mdp} $M$.
  \end{proof}


\section{Experiments}

We illustrate the solution using the soccer game example in Example~\ref{ex1}. First, it is observed that the \ac{mdp} in Def.~\ref{Meta-planning} has hybrid state space because the range $\Phi$ of the discrimination function is continuous. We employ a discretization-based approach to solve the \ac{mdp} by uniformly discretizing state-space $[0,c_\gamma]$ into $n$ intervals, $[(i - 1)\delta, i \delta], i = 1, 2, \dots, n$, where $\delta$ is the length of the interval. For a discrimination function value $\phi \in [(i - 1)\delta, i \delta]$, we label it as a discrimination function value level $\phi_i$.  For $\phi > c_\gamma$, we label it as $\phi_{ex}$. In the update of the discrimination function value, the midpoint of the interval represents the level $\phi_i$. For instance, if the current level is $\phi_i$, the update will be $\phi' = d(\frac{(2i - 1)\delta}{2}, \obs(s,(a,b),s'))$.
In the experiments, we set $\delta = 0.2$, $c_\gamma = 2$. Hence, there are $10$ levels with $[0, c_\gamma]$ and a level $\phi_{ex}$. 


\subsection{Value of Deception and Comparative Analysis}
We show the value of action deception by comparing the differences between some state values under different strategies. Since there are too many states in the MDP. We mainly focus on initial states, i.e., the state that $s \not\in \asw_1 \cup \asw_2$, $\phi = 0$, and $\bool = 0$.


Figure~\ref{fig:True_1} uses heat maps to show the value of action deception $\mathsf{VoD}(\optsemi)$ given the optimal switching strategy $\optsemi$ obtained by solving the semi-MDP. To make the results clearer,  we plot the figure by multiplying the value by 100, \ie,  the range of $\mathsf{VoD}(\optsemi)$ is enlarged to $[0,100]$ instead of $[0,1]$. We employ value iteration to solve the semi-\ac{mdp} that terminates when the Bellman error is below $0.1$ (with respect to the enlarged reward).
\begin{figure}[!ht]
  \includegraphics[width=\linewidth]{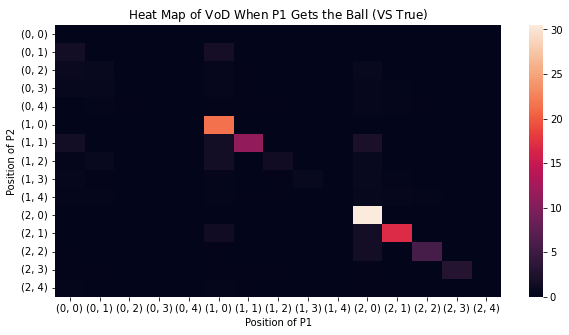}
  \includegraphics[width=\linewidth]{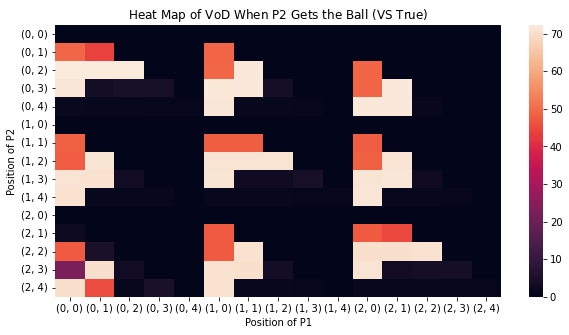}

  \caption{The $\mathsf{VoD}(\optsemi)$ given all possible initial positions.}
  \label{fig:True_1}
\end{figure}
From this figure, we observe that P1 will benefit more from action deception when P2 has the ball at the beginning of their interaction.   Across all initial states, the maximal VoD is $72.289$ and the minimal VoD is $0$. In other words, the maximum gain of winning probability for P1 is nearly $70\%$ for some states. If from a given initial state where $\mathsf{VoD}(\optsemi)$ is close to zero, then P1 may choose not to deviate and instead inform P2 of the true game.  

To gain more insight into P1's switching policy, Figure~\ref{fig:Switch_State} shows a snapshot of the game state where P1 switches his strategy.
\begin{figure}[!ht]
  \includegraphics[width=0.45\linewidth]{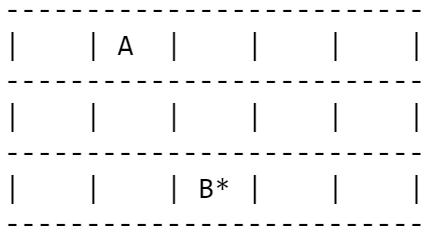}  
  \caption{P1 switches his strategy to avoid losing the game.}
  \label{fig:Switch_State}
\end{figure}
This state $s \in \asw_2^2$, i.e., the almost-sure winning region of P2 in $G^2$. P2 thinks that she will win with probability $1$ by moving left. However, if P2 knows the true game, P2 will not go left now since she knows that P1 can intercept the ball with probability $0.5$ by using the hidden action $a_H$. That is, $\pi_2^2(a_L \mid s) = 1$ and $\pi_2(a_L \mid s) = 0$ at the current state $s$. When concurrently, P2 moves left according to $\pi_2^2$ and P1 uses the hidden action to move down two cells, P1 increase his chance of winning by $50\%$.



To understand how the delay in detection can be exploited by P1, we perform the following experiment. Instead of using a realistic change detection, the semi-\ac{mdp} is constructed assuming that P2 has no delay in detecting the change. Then, by solving this semi-\ac{mdp}, we obtained an optimal switching strategy $\so$ for P1 to play against such a strong opponent P2. 

Figure~\ref{fig:VS_SO} uses heat maps to show the differences of state value, i.e, $u_1(s_0, \optsemi, \pi_2^2) - u_1(s_0, \so, \pi_2^2)$ for different initial states. Note that the difference equals the difference between the value of deception against the realistic P2 and the value of deception against the strong opponent.
\begin{figure}[!ht]
  \includegraphics[width=\linewidth]{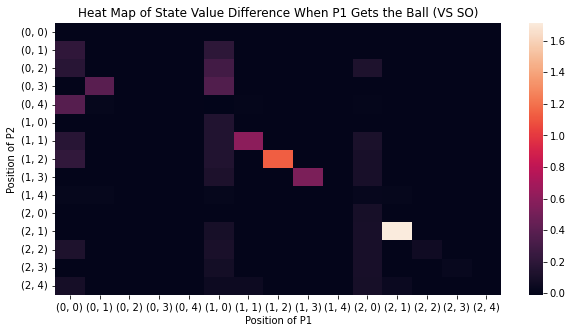}
  \includegraphics[width=\linewidth]{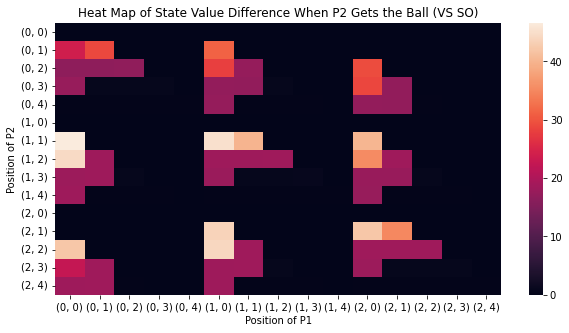}
  \caption{The difference between $\mathsf{VoD}$  for the realistic P2  and a  strong opponent P2 with no detection delay.}
  \label{fig:VS_SO}
\end{figure}
Across all initial states, the maximal state value difference is $46.564$ and the minimal state value difference is $0$.  We also observed that the sets of initial states
with higher values of deception are consistent between the case playing against the realistic P2 (Fig.~\ref{fig:True_1}) and the strong opponent P2 (Fig.~\ref{fig:VS_SO}). This result also highlights the importance of exploiting P2's detection delay.



\subsection{Sensitivity Analysis of Detection Threshold}
In the construction of the semi-\ac{mdp}, we fixed the threshold of the change detection algorithm. However, it is possible that the true detection threshold used by P2 can be different. We conduct experiments to assess how effective the deceptive strategies are against uncertainty in the detection threshold.

%
The higher $c_\gamma$ is, the less sensitive the detector becomes. Our previous experiment sets $c_\gamma = 2$ which is a relatively sensitive value. 
Next, we consider different values of  $c_\gamma = 1, 5, 8, 12$, respectively. 
For each $c_\gamma$ value, we construct the corresponding semi-\ac{mdp} and then evaluate the strategy computed in the semi-\ac{mdp} when $c_\gamma = 2$ in these different semi-\ac{mdp}s, referred to as $M^1,M^5,M^8, M^{12}$. In this way, we can evaluate how robust P1's strategy performs if P2 employs a detection threshold $c_\gamma=1,5,8,12$ while P1 thinks P2's detection threshold is $c_\gamma=2$. 

The value of $\optsemi$ in $M^i$ is denoted  $u_1^{i}$ for $i=1,5,8,12$, respectively. And the value of $\optsemi$ in the original semi-\ac{mdp} $M$ given $c_\gamma=2$ is denoted $u_1^2$.
The following table (Table~\ref{tab:sensitivity}) shows the maximum values of $u_1^2 (s_0, \optsemi)- u_1^{i}(s_0, \optsemi)$, for each $ i = 1,5,8,12$. From this result, we observe that the performance does not degrade much. At the initial state where the maximum value of $u_1^2 (s_0, \optsemi)- u_1^{i}(s_0, \optsemi)$ is observed, the computed policy $\optsemi$ has a performance degradation within the range of $[2\%,5\%]$.

\begin{table}[H] 
\begin{tabular}{|c|c|c|c|c|}
\hline
              & $c_\gamma=1$ &$c_\gamma=5$ & $c_\gamma=8$&$c_\gamma=12$\\ \hline
maximum difference & 2.235      & 3.529      & 3.464      & 3.403       \\ \hline
original value $u_1^2 (s_0, \optsemi)$ & 92.61   & 94.00  & 94.00     & 94.00 \\ \hline
\end{tabular}
\caption{Comparison of the state value under $\optswitch$ in different semi-\ac{mdp} with $c_\gamma=1,5,8,12$.}
\label{tab:sensitivity}
\end{table}

\section{Conclusion}
In this paper, we develop a planning algorithm for action deception in two-player concurrent stochastic games with asymmetric information in both players' knowledge and observations.  We formally prove that the synthesized switching strategy provides a lower bound on the value of action deception, despite the incomplete information regarding P2’s response strategy.
Building on this result, there are several future directions to be considered: First, whether it is possible to extend the solution concepts from competitive interactions to more general non-cooperative interactions. In practice, asymmetric information is prevalent in multi-agent interactions and it is possible that the agents' intentions can be partially aligned. If one player knows that the other player may not know his action capabilities but is adaptive, how can this player strategically use the private actions to improve multi-agent collaboration? Another direction is to consider action deception in a competitive setting but with a more general information structure, for instance, what if both P1 and P2 have partial observations over state and action sequences? It is interesting to know which subclass of such games may have tractable solutions.

\balance
\begin{acks}
This research was sponsored by the Army Research Office (ARO) and was accomplished under Grant
Number W911NF-22-1-0034 and Grant Number 
W911NF-22-1-0166. 
The views and conclusions contained in this document are those of the authors and
should not be interpreted as representing the official policies, either expressed or implied, of the Army Research
Office or the U.S. Government. The U.S. Government is authorized to reproduce and distribute reprints for
Government purposes notwithstanding any copyright notation herein.
\end{acks}


\bibliographystyle{ACM-Reference-Format} 
\bibliography{refs,jie_refs}


\end{document}
